\newcommand{\pot}{\mathrm{pot}}
\newcommand{\Real}{\mathbb{R}}
\newcommand{\Torus}{\mathbb{T}}
\newcommand{\bbZ}{\mathbb{Z}}
\DeclareMathOperator{\Coulomb}{Coulomb}
\title{A simple bound on fluctuations in the 3D Coulomb gas}
\author{Alex Cohen}
\address{Courant Institute, New York University. New York, NY, USA.}
\email{alexcohen@nyu.edu}
\author{Felipe Hern\'andez}
\address{Department of Mathematics, Penn State University. State College, PA, USA.}
\email{felipeh@psu.edu}
\begin{document}
\begin{abstract}
The Coulomb gas models an interacting system of $N$ negatively charged particles. 
We give a new proof that, at sufficiently low temperature, smooth linear statistics $\sum_j \varphi(x_j)$ are bounded by $C N^{1-2/d}$. 
\end{abstract}
\maketitle 

\section{Introduction}
The one-component plasma, also known as the Coulomb gas or jellium, describes a system of $N$ particles $X = (x_1,\dots,x_n)\in (\Real^d)^N$ of negative charge
in background of uniform positive charge density.  This is the simplest model of interacting charged particles, and it appears to describe certain physical systems such as charged colloidal mixtures~\cite{colloid}.  Of particular interest in this model is the distribution of particles at equilibrium, which is \textit{hyperuniform}~\cites{JLM,Leb}: a simple physical argument based on Gauss's law~\cite{Martin} predicts that the fluctuation of the number of particles $N_\Omega$ in a macroscopic 
region $\Omega\subset\Real^3$ should scale not with the volume $|\Omega|$ but instead with the surface area $|\partial \Omega|$.
This prediction remains conjectural.

Strong results have been established in dimension two~\cite{BBNY,Leble,Thoma,NishryYakir}, and in dimension
one the Coulomb gas is the familiar $\beta$-ensemble from random matrix theory.

In this paper we consider the fluctuations of \textit{smooth} linear statistics of the form $\langle \mu_X, \varphi\rangle := \sum \varphi(x_j)$ for some test function $\varphi\in C^2(\Real^n)$.  
If the particles were placed independently at random, the fluctuations would have order $\sim N^{1/2}$. 
Are the fluctuations of the Coulomb gas smaller than that?
The first progress in $d = 3$ was made by Chatterjee~\cite{Chatterjee}, and further developed by Ganguly-Sarkar~\cite{GangulySarkar}, who proved hyperuniformity for a related hierarchical model.
In $\R^d$ for $d \geq 3$, Serfaty~\cite{Serfaty2020} proved a number of results about the Coulomb gas at different length scales and temperatures, including that (over a large range of temperatures) the order of fluctuations is bounded by $C N^{1 - 2/d}$. In $d = 3$, this is less than that of independent particles. 
She also proved a sharper conditional result describing a central limit theorem for fluctuations. 

In this note, we give a new proof of the fluctuation bound $C N^{1-2/d}$. 
Our proof is based on the idea that one can prove fluctuations for linear statistics by first proving a bound on the fluctuations of the potential function $P\mu_X = \Delta^{-1} \mu_X$, via the identity
\[
|\langle \varphi,\mu_X\rangle| = |\langle \Delta \varphi, P\mu_X\rangle| \leq \|\Delta \varphi\|_{L^\infty}\|P\mu_X\|_{L^1}.
\]
This identity works on the torus, and a variant taking into account a confining potential can also be used on $\Real^n$.
The point then is that, because the potential has mean zero, the $L^1$ norm $\|P\mu_X\|_{L^1}$ 
can only be very large if there are some regions with 
extremely large negative potential.  Intuitively, this can be controlled by observing the Glauber dynamics. If $P\mu_X$ is very negative somewhere, then a resampled particle would prefer to move to this region and re-balance the potential somewhat.  

On $\Real^d$ one must adapt this argument to incorporate the effect of the confining potential, and 
we only obtain results for observables that are contained in the support of the limiting equilibrium measure.  In this case, our result is not as self-contained, relying for example on estimates for the minimum energy configuration due to Serfaty~\cite{Serfaty}.

We note that our method applies to macroscopic observables, whereas the transport method in~\cite{Serfaty2020} gives fluctuation bounds for observables supported on small length scales. 

\subsection{Result on the torus $\Real^d / \bbZ^d$}
We define a Gibbs measure on $(\Torus^d)^N := (\Real^d/\bbZ^d)^N$ using the Hamiltonian
\[
\mc{H}(X) := \sum_{1\leq i<j\leq N} g_{\rm Coulomb}(x_i-x_j), 
\]
where $g_{\rm Coulomb}$ solves
\[
\Delta g_{\rm Coulomb} = -\delta_0 + 1.
\]
We then define the Gibbs measure $d\P_{N,\beta}$ on $(\Torus^d)^N$ by\footnote{Note that this differs from the definition of~\cite{Serfaty} by a multiplicative factor of $N^{\frac{d-2}{d}}$ in $\beta$.  This is just a scaling convention.}
\[
d\P_{N,\beta}(X) = \frac{1}{Z_{N,\beta}}e^{-\beta \mc H_N(X)}dX,
\]
with partition function 
\[
Z_{N,\beta} := \int_{(\Torus^d)^N} e^{-\beta \mc H_N(X)} \,dX.
\]
The empirical measure of a configuration $X$ is given by
\[
\mu_X := \sum_{j=1}^N \delta_{x_j}.
\]
Our main result is the following concentration inequality for smooth statistics $\langle \varphi, \mu_X\rangle$.
\begin{theorem}[Bound for fluctuations on $\Torus^d$]
\label{thm:mainTorus}
For $d\geq 3$ there exists some absolute constant $C_d$ such that for $A\geq C_d$ the following bound holds for $X$ sampled
from the Gibbs measure $\P_{N,\beta}$ and $\phi\in C^2(\Torus^d)$:
\begin{equation}
\label{eq:fluct-bd}
P_{N,\beta}( |\langle \phi, \mu_X\rangle -1_{\Torus^d}\rangle| \geq A N^{1-2/d}\|\Delta \phi\|_{L^\infty})  \leq 2 e^{-\frac{1}{2} A \beta N^{1-2/d}}. 
\end{equation}
\end{theorem}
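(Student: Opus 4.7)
The first step is the identity highlighted in the introduction. Let $u := P(\mu_X - N \cdot 1_{\Torus^d})$ be the mean-zero potential ($\Delta u = \mu_X - N$). Then
\[
|\langle \phi, \mu_X - N \cdot 1_{\Torus^d}\rangle| = |\langle \Delta\phi, u\rangle| \leq \|\Delta\phi\|_{L^\infty}\|u\|_{L^1},
\]
so it suffices to prove $\P_{N,\beta}(\|u\|_{L^1} \geq A N^{1-2/d}) \lesssim e^{-A\beta N^{1-2/d}/2}$ for $A \geq C_d$. Concretely, fixing the additive constant in $g_{\mathrm{Coulomb}}$ so that $\int_{\Torus^d} g_{\mathrm{Coulomb}} = 0$, one has $u(y) = -V_X(y)$ where $V_X(y) := \sum_j g_{\mathrm{Coulomb}}(y - x_j)$ is the potential at $y$ generated by the particles.

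The second step reduces the $L^1$ tail to a pointwise moment generating function. Since $\int u = 0$ on the unit-volume torus, $\|u\|_{L^1} = 2\int u_+\,dy$, and the event in question is $\{\int u_+\,dy \geq A N^{1-2/d}/2\}$. Applying Markov at rate $\beta$, Jensen's inequality $e^{\beta\int u_+ dy} \leq \int e^{\beta u_+(y)}\,dy$ (valid since $dy$ is a probability measure), translation invariance of the Gibbs measure, and the pointwise bound $e^{\beta u_+} \leq 1 + e^{\beta u}$ yields
\[
\P_{N,\beta}(\|u\|_{L^1} \geq A N^{1-2/d}) \leq e^{-\beta A N^{1-2/d}/2}\bigl(1 + \mathbb{E}_{N,\beta}[e^{\beta u(y_0)}]\bigr)
\]
for any fixed $y_0 \in \Torus^d$. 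The crucial identity is that this moment generating function is exactly a partition-function ratio:
\[
\mathbb{E}_{N,\beta}[e^{\beta u(y_0)}] = \mathbb{E}_{N,\beta}[e^{-\beta V_X(y_0)}] = \frac{Z_{N+1,\beta}}{Z_{N,\beta}},
\]
independent of $y_0$ by torus translation symmetry, since $e^{-\beta V_X(y_0)}$ is precisely the Boltzmann weight for inserting an $(N+1)$-st test particle at position $y_0$. This is the rigorous counterpart of the Glauber-dynamics intuition sketched in the introduction: the ratio measures the propensity of a resampled particle to find a favorable location.

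What remains, and this is the main obstacle, is the partition-function estimate $Z_{N+1,\beta}/Z_{N,\beta} \leq e^{C_d \beta N^{1-2/d}}$. Physically this bounds the chemical potential of inserting a particle, consistent with the expected scaling $\partial_N E_{\min}(N) \sim N^{1-2/d}$ coming from $E_{\min}(N) \sim N^{2-2/d}$ in dimension $d \geq 3$. My plan is to establish it via a trial-configuration argument: a regular periodic configuration $X^\ast$ on $\Torus^d$ has energy $\mathcal{H}(X^\ast) \lesssim N^{2-2/d}$, and integrating over small perturbations of $X^\ast$ provides a lower bound on $Z_{N,\beta}$, which combined with a matching upper bound on $Z_{N+1,\beta}$ (for instance via an Onsager-type lower bound relating $\mathcal{H}(X)$ to a regularized Dirichlet energy) yields the ratio estimate. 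Feeding this back into the probability bound, the tail becomes of the form $2 e^{-\beta(A/2 - C_d) N^{1-2/d}}$, matching the stated theorem once $A$ is required to exceed a suitable multiple of $C_d$.
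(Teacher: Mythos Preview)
Your reduction is correct and in fact coincides with the paper's: the identity $|\langle\phi,\mu_X-N\rangle|\le\|\Delta\phi\|_\infty\|P\mu_X\|_{L^1}$, the passage from $\|P\mu_X\|_{L^1}$ to $\int e^{-\beta P\mu_X(y)}\,dy$ via Jensen on the negative part, and your observation that (by translation invariance on the unit-volume torus) $\E_{N,\beta}[e^{-\beta P\mu_X(y_0)}]=\E_{N,\beta}\bigl[\int e^{-\beta P\mu_X(y)}\,dy\bigr]=Z_{N+1,\beta}/Z_{N,\beta}$ are all fine. This last quantity is exactly what the paper bounds in its key probabilistic estimate (the intermediate inequality $\E[\int e^{-\beta P\mu_X}\,dx]\le e^{\beta m_{\mathrm{pot}}}e^{-\frac{2\beta}{N}\min\mathcal H_N}$).

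The gap is in how you propose to control $Z_{N+1,\beta}/Z_{N,\beta}$. Bounding $Z_{N+1,\beta}$ from above by Onsager and $Z_{N,\beta}$ from below by a trial configuration gives two estimates of the form $\log Z = O(\beta N^{2-2/d})$ whose \emph{leading constants need not agree}; the ratio is then only controlled by $e^{C\beta N^{2-2/d}}$, a full factor of $N$ worse than what you need. Getting those constants to match is tantamount to proving the thermodynamic limit of the free energy with an $O(N^{1-2/d})$ error term, which is considerably deeper than the theorem you are trying to prove. Your heuristic $\partial_N E_{\min}(N)\sim N^{1-2/d}$ is exactly this issue in disguise: it only follows from $E_{\min}(N)\sim -cN^{2-2/d}$ if the implicit constant $c$ is the same for $N$ and $N+1$ to sufficient precision.

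The paper avoids this entirely by a leave-one-out argument that bounds the ratio \emph{directly}, using only the one-sided Onsager bound $\min_X\mathcal H_N(X)\ge -C_d N^{2-2/d}$. Writing $\mathcal H_N(X)=P\mu_{X,\hat j}(x_j)+\mathcal H_{N-1}(X_{\hat j})$, the conditional law of $x_j$ given $X_{\hat j}$ satisfies the exact identity $\E_{x_j|X_{\hat j}}[e^{\beta P\mu_{X,\hat j}(x_j)}]=\bigl(\int e^{-\beta P\mu_{X,\hat j}(y)}\,dy\bigr)^{-1}$. After replacing $P\mu_{X,\hat j}$ by $P\mu_X$ at the cost of $e^{\beta m_{\mathrm{pot}}}$, averaging over $j$, and applying Jensen together with $\sum_j P\mu_{X,\hat j}(x_j)=2\mathcal H_N(X)$, one obtains
\[
\E_{N,\beta}\Bigl[\int e^{-\beta P\mu_X(y)}\,dy\Bigr]\le e^{\beta m_{\mathrm{pot}}}\,e^{-\frac{2\beta}{N}\min_X\mathcal H_N(X)}\le e^{C_d\beta N^{1-2/d}},
\]
which is precisely the chemical-potential bound you wanted, obtained without any trial configuration or matching of constants.
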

Theorem~\ref{thm:mainTorus} shows that for smooth $\varphi$, fluctuations of $\langle \phi,\mu_X\rangle$ are on the order $N^{1/3}$ in dimension $d=3$ and for any $\beta \geq N^{-1/3}$ (corresponding to temperature $T\leq N^{1/3}$).
In this regime of macroscopic observables in $d=3$ and $T\simeq N^{1/3}$, our result matches the fluctuation bound of Serfaty~\cite{Serfaty2020}, also see~\cite[Corollary 1.4]{Serfaty}.  

\subsection{Result on $\Real^d$}
On Euclidean space we need to consider a Hamiltonian with a confining potential $V:\R^d\to\R$.  
In this case, for $X\in (\R^d)^N$ the Hamiltonian reads 
\begin{align*}
    \mc H_N(X) := \frac{1}{2}\sum_{j\neq k}g(x_j - x_k) + N\sum_{1\leq j \leq N} V(x_j),
\end{align*}
where we take $\Delta g = -\delta_0$ to be the Coulomb interaction.
Again we consider the Gibbs measure 
\[
d\P_{N,\beta}(X) = \frac{1}{Z_{N,\beta}} e^{-\beta \mc H_N(X)}\, dX
\]
where the partition function 
\[
Z_{N,\beta} = \int_{(\R^d)^N} e^{-\beta \mc H_N(X)}\, dX
\]
is bounded as long as $\int \exp(-V(x))\, dx < \infty$. 

In the limit $N\to \infty$ the one-particle distribution of $X$ converges to an equilibrium measure which is the minimizer
of the following functional
\begin{align*}
    \mc E(\mu) := \frac{1}{2}\int \int g(x-y)\, d\mu(x) d\mu(y) + \langle 1, \mu\rangle \int V(x)\, d\mu(x).
\end{align*}

A theorem due to Frostman and recalled below in Theorem~\ref{thm:FrostmanThm} guarantees the existence of compactly supported minimizers
$\mu_V$ for $\mcal{E}$.  Note that in the special case $d = 3$, and $V(x) = \frac{1}{2}|x|^2$ is the quadratic potential, then $\mu_V$ is the Lebesgue probability measure on a ball centered at the origin. 

Following Serfaty~\cite{Serfaty}, we assume $V$ satisfies the following properties. The assumptions simplify compared to \cite{Serfaty} because we assumed $g \geq 0$. 
\begin{enumerate}[label=(B\arabic*)]
    \item $V$ is lower semi-continuous and bounded below, \label{ass:V_1}

    \item $\lim_{|x|\to \infty} V(x) = \infty$, \label{ass:V_2}

    \item $\{x\in \R^d\, :\, V(x) < +\infty\}$ has positive capacity, \label{ass:V_3} 

    \item $\int_{\R^d} \exp(-V(x))\, dx < \infty$, \label{ass:V_4}

    \item The equilibrium measure is bounded in $L^{\infty}$, that is, $\mu_V = \mu_V(x) dx$ for some $\mu_V(x) \in L^{\infty}(\R^d)$. \label{ass:V_5}
\end{enumerate}
In particular, the potential $V(x) = \frac12|x|^2$ is valid.
With these assumptions we have the following result about the fluctuations of smooth observables.
\begin{theorem}\label{thm:fluctuations_Rd}
Let $\mc H_N$ be the Coulomb interaction in $d\geq 3$ with external potential $V$ satisfying \ref{ass:V_1}---\ref{ass:V_5}, and let $N\geq 2$, $\beta \geq 1/(N-1)$. Assume the equilibrium measure $\mu_V$ lies in $L^{\infty}$. Then there exist constants $c$ and $C$ depending on $V$ such that, for any smooth observable $\varphi \in C_c^2(\Sigma)$ supported in $\Sigma$ with $\langle \varphi,\mu_V\rangle=0$,
\begin{align*}
    \P_{N,\beta}(|\langle \varphi, \mu_X\rangle| \geq AN^{1-2/d} \| \Delta \phi \|_{L^{\infty}}) \leq Ce^{\beta N^{1-2/d}(C - cA)}. 
\end{align*}
\end{theorem}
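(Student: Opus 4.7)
The plan is to follow the torus strategy sketched in the introduction. Define $u_X := g * (\mu_X - N\mu_V)$, so that $-\Delta u_X = \mu_X - N\mu_V$ distributionally; since $\mu_X - N\mu_V$ has total mass zero and $\mu_V$ is compactly supported on $\Sigma$, $u_X$ decays at infinity in $d \geq 3$. For any $\varphi \in C_c^2(\Sigma)$ with $\langle \varphi, \mu_V\rangle = 0$, two integrations by parts give
\[
\langle \varphi, \mu_X\rangle = \langle \varphi, \mu_X - N\mu_V\rangle = -\int u_X\, \Delta \varphi\, dx,
\]
so $|\langle \varphi, \mu_X\rangle| \leq \|\Delta \varphi\|_{L^\infty}\, \|u_X\|_{L^1(\Sigma)}$, and the theorem reduces to the concentration inequality
\[
\P_{N,\beta}\bigl(\|u_X\|_{L^1(\Sigma)} \geq A N^{1-2/d}\bigr) \leq C e^{\beta N^{1-2/d}(C - cA)}.
\]

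To transfer the torus proof, I would exploit the Frostman--Euler--Lagrange identity $g * \mu_V + V \equiv c_V$ on $\Sigma$ (with $\geq c_V$ off $\Sigma$) to rewrite the effective one-body potential felt by particle $i$. For $x \in \Sigma$ this reads
\[
h_i(x) := \sum_{j \neq i} g(x - x_j) + NV(x) = u_X(x) - g(x - x_i) + N c_V,
\]
and the same expression is a lower bound for $x \in \Sigma^c$. The conditional density of $x_i$ given the other particles is therefore proportional to $e^{-\beta u_X(x) + \beta g(x - x_i)}$ on $\Sigma$, with strong suppression off $\Sigma$ coming from the excess confinement. This is structurally the same one-body identity used on the torus, so the Glauber / log-Laplace machinery transfers inside $\Sigma$ essentially verbatim: I would bound $\int_\Sigma u_X^-\, dx$ and $\int_\Sigma u_X^+\, dx$ separately by rewriting each Laplace transform as a ratio of partition functions for configurations with one resampled or one added particle, and controlling these ratios using Serfaty's minimum-energy and free-energy bounds from \cite{Serfaty}. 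The two signs have the same Glauber interpretation: regions of large $u_X^-$ are rare because a resampled particle would preferentially fill them, and regions of large $u_X^+$ are rare because of the associated Coulomb self-repulsion.

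The principal obstacle will be controlling configurations with particles lying outside $\Sigma$: there the Euler--Lagrange identity degenerates into an inequality and the confining potential enters only as a penalty, so the one-body formula above is no longer exact. I expect to handle this with a coarse tail estimate --- relying on \ref{ass:V_1}--\ref{ass:V_5}, in particular the coercivity in \ref{ass:V_2} --- showing that the probability of having any particles outside a small neighbourhood of $\Sigma$ is exponentially suppressed at a rate much faster than $\beta N^{1 - 2/d}$, together with a crude upper bound on the contribution of such exterior particles to $\|u_X\|_{L^1(\Sigma)}$ via the smoothness of $g$ away from $0$. The routine singularity of $g$ at the origin in the self-interaction term $g(x - x_i)$ is addressed by the standard regularization at scale $N^{-1/d}$ used throughout the Coulomb-gas literature.
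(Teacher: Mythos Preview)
Your resampling/Glauber core is exactly the paper's idea, and your reduction $|\langle\varphi,\mu_X\rangle|\leq \|\Delta\varphi\|_{L^\infty}\|u_X\|_{L^1(\Sigma)}$ is equivalent to theirs. But two of your planned steps diverge from the paper and are the weakest parts of the proposal.

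First, you propose to bound $\int_\Sigma u_X^{+}$ and $\int_\Sigma u_X^{-}$ by \emph{separate} mechanisms, with the positive part handled by some ``self-repulsion'' or ``add a particle'' argument you do not specify. The paper avoids this entirely. It works not with $u_X=g*(\mu_X-N\mu_V)$ but with the full self-adjoint potential $P\mu_X(x)=g*\mu_X(x)+\langle V,\mu_X\rangle+NV(x)$ and the $L^1(\mu_V)$ norm. Self-adjointness gives $\int P\mu_X\,d\mu_V=\langle P\mu_V,\mu_X\rangle=\langle\zeta,\mu_X\rangle$, so
\[
\|P\mu_X\|_{L^1(\mu_V)}=\langle\zeta,\mu_X\rangle-2\int (P\mu_X)_-\,d\mu_V,
\]
and only the negative part needs the Jensen/resampling argument. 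The residual $\langle\zeta,\mu_X\rangle$ is then controlled by Serfaty's confinement estimate (an exponential moment bound for $\frac12\beta N\langle\zeta,\mu_X\rangle$). On $\Sigma$ your $u_X$ and the paper's $P\mu_X$ differ by the $X$-dependent constant $\langle V,\mu_X\rangle-N\langle V,\mu_V\rangle$, so without a device like the above you would still owe a bound on this shift; your ``self-repulsion'' sketch does not supply one.

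Second, your plan to handle particles outside $\Sigma$ by a separate tail estimate plus a crude bound on their contribution to $\|u_X\|_{L^1(\Sigma)}$ is more fragile than you suggest, and unnecessary. The paper never splits into cases: it simply carries the weight $e^{\beta\langle\zeta,\mu_X\rangle}$ through the exponential-moment computation (this is where the Euler--Lagrange \emph{inequality} off $\Sigma$ enters, since $\zeta\geq 0$), and the same Serfaty confinement bound absorbs it at the end. No regularization at scale $N^{-1/d}$ is needed in the main argument; that only appears inside the cited ground-state bound $L_N\geq -CN^{2-2/d}$.

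In short: keep your resampling argument for the negative part, but replace the ad hoc treatment of $u_X^{+}$ and of escaping particles by the single identity $\int P\mu_X\,d\mu_V=\langle\zeta,\mu_X\rangle$ together with the cited exponential moment for $\langle\zeta,\mu_X\rangle$.
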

In particular, this result implies that in $d=3$ the fluctuations of $C^2$ observables are on the order $N^{1/3}$.  

\subsection{Organization of the Paper}
The result on the torus, Theorem~\ref{thm:mainTorus} is proven in Section~\ref{sec:torus}.  The generalization of the argument to the full
Euclidean space and the derivation of Theorem~\ref{thm:fluctuations_Rd} is then completed in Section~\ref{sec:Rd}.

\subsection{Acknowledgements}
Thanks to Sylvia Serfaty for helpful comments on this paper.  FH was supported by NSF grant DMS-2303094. Alex Cohen was supported by a Clay Research Fellowship. 

\section{Coulomb gas on the torus}
\label{sec:torus}
\subsection{The probabilistic estimate}
Our result on the torus applies to a slightly more general setup than just the Coulomb interaction, and it is clarifying to precisely what properties of the interaction $g(x,y)$ are used.  In particular, we consider interaction potentials $g$ that are:
\begin{enumerate}[label=(A\arabic*)]
    \item Symmetric: for all $x,y\in\Torus^d$, $g(x,y) = g(y,x)$, \label{assT:1}
    \item Bounded in $L^1$: for all $x\in\Torus^d$, $\int_{\Torus^d} |g(x,y)|dy < \infty$ \label{assT:2}
    \item Mean zero: for all $x\in\Torus^d$, $\int g(x,y)\, dy = 0$. \label{assT:3}
    \item Bounded from below: there exists $m_{\pot} < \infty$ such that $g(x,y) \geq -m_{\pot}$. \label{assT:4}
\end{enumerate}
Our model case is the Coulomb potential $g_{\Coulomb}(x,y) = g_{\Coulomb}(x-y)$, which is given by
\begin{equation}
\label{eq:g-torus-def}
    g_{\Coulomb}(x) = \sum_{\xi \in \Z^d\setminus \{0\}} \frac{1}{4\pi |\xi|^2} e^{2\pi i \xi \cdot x}.
\end{equation}
Letting $\Delta = \partial_{x_1}^2 + \dots + \partial_{x_d}^2$ be the Laplacian, the Coulomb potential solves the equation
\begin{align}\label{eq:LaplacianEqCoulombTorus}
\Delta g_{\Coulomb} = -\delta_0 + 1_{\T^d}.
\end{align}
The Coulomb potential is smooth away from $x = 0$, and its asymptotic behavior near zero is the same as in $\R^d$. In particular, for $d \geq 3$
\begin{align}
\label{eq:GreenFuncAsymptotic_Torus}
g_{\Coulomb}(x) = c |x|^{2-d} + \text{(Smooth function)}.
\end{align}
Once this asymptotic and the Fourier expansion are known, \ref{assT:1}--\ref{assT:4} are immediate. 

The quantity we consider is the potential field $P\mu_X$ of the empirical measure, where
\begin{equation}\label{eq:potential_field}
    P\mu(x) = \int g(x,y)d\mu(y).
\end{equation}
The potential $P\mu_X$ determines where a putative additional particle would be placed.  In other words, if $P\mu_X$ is very negative in some region, then it might be energetically favorable to reposition a single particle to that region.
Our main contribution is a probabilistic estimate for the $L^1$ norm of the potential field $P\mu_X$ in terms of the 
energy of the ground state configuration, $\min_X \mc H_N(X)$.   
\begin{lemma}
\label{lem:Lone-moment-torus}
Let $g(x,y)$ be a potential satisfying \ref{assT:1}--\ref{assT:4}, $\mc H_N$ the corresponding Hamiltonian, and $\P_{N,\beta}$ the associated Gibbs measure. We have the exponential moment bound
\[
\E_{X\sim \P_{N,\beta}}[ e^{\frac12 \beta \| P\mu_X\|_{L^1}}]
\leq e^{\beta m_{\rm pot}} e^{-\frac{2\beta}{N} \min_X \mc{H}_N(X)} + 1.
\]
In particular, this estimate holds for the Coulomb interaction.
\end{lemma}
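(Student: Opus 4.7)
The plan is to convert the exponential moment estimate into a bound on the partition-function ratio $Z_{N+1,\beta}/Z_{N,\beta}$ and then control that ratio via a symmetry identity together with H\"older's inequality.

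First, the mean-zero assumption \ref{assT:3} gives $\int P\mu_X = 0$, so $\|P\mu_X\|_{L^1} = 2\int (-P\mu_X)_+\, dx$. Since $|\Torus^d|=1$, Jensen's inequality applied to the convex function $e^{\beta \cdot}$ gives
\begin{align*}
\exp\left(\tfrac{1}{2}\beta \|P\mu_X\|_{L^1}\right) \le \int_{\Torus^d} e^{\beta (-P\mu_X(x))_+}\, dx \le \int_{\Torus^d} e^{-\beta P\mu_X(x)}\, dx + 1,
\end{align*}
where the last step uses the pointwise bound $e^{t_+}\le e^t + 1$. Taking $\E_{X\sim\P_{N,\beta}}$ and using $\mc H_N(X) + P\mu_X(x) = \mc H_{N+1}(X,x)$ produces
\begin{align*}
\E\left[e^{\frac{1}{2}\beta \|P\mu_X\|_{L^1}}\right] \le \frac{Z_{N+1,\beta}}{Z_{N,\beta}} + 1,
\end{align*}
reducing the claim to bounding the partition-function ratio.

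The key observation for that bound is the symmetry identity
\begin{align*}
\mc H_{N+1}(Y) = \frac{1}{N-1}\sum_{i=1}^{N+1}\mc H_N(Y_{-i}),
\end{align*}
which holds because each interaction $g(y_j,y_k)$ appears in exactly $N-1$ of the sub-Hamiltonians $\mc H_N(Y_{-i})$. Factoring $e^{-\beta \mc H_{N+1}(Y)} = \prod_i e^{-\frac{\beta}{N-1}\mc H_N(Y_{-i})}$ and applying H\"older's inequality with all $N+1$ exponents equal to $N+1$, each factor integrates to $Z_{N,\beta'}$ where $\beta' := \beta(N+1)/(N-1)$, so $Z_{N+1,\beta}\le Z_{N,\beta'}$. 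Since $\log Z_{N,s}$ has derivative $-\E_s[\mc H_N]\ge -\min \mc H_N$, integrating from $s=\beta$ to $s=\beta'$ yields
\begin{align*}
\frac{Z_{N+1,\beta}}{Z_{N,\beta}}\le e^{-(\beta'-\beta)\min \mc H_N} = e^{-\frac{2\beta}{N-1}\min \mc H_N}.
\end{align*}

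To recover the lemma's $\tfrac{2}{N}$ rather than $\tfrac{2}{N-1}$, I use the cheap bound from \ref{assT:4}, namely $\min \mc H_N \ge -\binom{N}{2}m_{\pot}$, together with the identity $\tfrac{2}{N-1} = \tfrac{2}{N} + \tfrac{2}{N(N-1)}$: the excess factor $e^{-\tfrac{2\beta}{N(N-1)}\min \mc H_N}$ is at most $e^{\beta m_{\pot}}$. Combining with the previous steps proves the lemma. I expect the main step to be the symmetry identity for $\mc H_{N+1}$: once that is in hand, H\"older's inequality delivers the Hamiltonian temperature-change exactly, and the remaining steps are standard convexity and algebra.
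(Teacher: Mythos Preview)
Your argument is correct (for $N\geq 2$; the symmetry identity degenerates at $N=1$), but it follows a genuinely different route from the paper. Both proofs share the Jensen step converting $\|P\mu_X\|_{L^1}$ into $\int e^{-\beta P\mu_X}$. From there the paper works \emph{conditionally}: it uses the one-particle resampling identity $\E_{x|X_{\hat j}}[e^{\beta P\mu_{X,\hat j}(x)}]=\bigl(\int e^{-\beta P\mu_{X,\hat j}}\bigr)^{-1}$, spends the $m_{\pot}$ to replace the leave-one-out potential by the full one, and then averages over $j$ with Jensen to produce the factor $e^{\frac{2\beta}{N}\mc H_N(X)}$ \emph{inside} the expectation before bounding $\mc H_N(X)\geq\min\mc H_N$. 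You instead recognize $\E[\int e^{-\beta P\mu_X}]$ as the ratio $Z_{N+1,\beta}/Z_{N,\beta}$ outright, and control it via the pair-counting identity $\mc H_{N+1}=\frac{1}{N-1}\sum_i \mc H_N(Y_{-i})$, H\"older, and convexity of the free energy; $m_{\pot}$ enters only at the end to fix the $\tfrac{2}{N-1}$ versus $\tfrac{2}{N}$ discrepancy. Your route is clean and uses only extensive thermodynamic quantities, but it gives up slightly more (hence the final patch) and, more importantly, does not extend painlessly to the confined Euclidean setting of Section~\ref{sec:Rd}: with the $N$-dependent external potential the symmetry identity picks up an extra $\sum_j V(y_j)$ term, whereas the paper's conditional approach---which keeps $\mc H_N(X)$ inside the expectation---is exactly what lets one insert the $e^{\beta\langle\zeta,\mu_X\rangle}$ factor needed for Proposition~\ref{prop:exponential_moment_estimate}.
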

We can use this estimate for the potential field generated by $X$ to estimate the fluctuation of linear observables 
\[
\langle \phi,\mu_X\rangle = \sum_{1 \leq j \leq N} \phi(x_j),
\]
where $\phi: \T^d \to \C$ is a $C^2$ function. 
Without loss of generality, we may assume that $\phi$ has mean zero (as the integral
of $\mu_X$ does not fluctuate). By \eqref{eq:LaplacianEqCoulombTorus}, $\Delta P\mu_X = N 1_{\T^d} - \mu_X$, and using self-adjointness to move the Laplacian onto $\Delta$, we obtain
\[
|\langle \phi,\mu_X\rangle| = |\langle \Delta \phi, P\mu_X \rangle|
\leq \|\Delta\phi\|_{L^\infty} \|P\mu_X\|_{L^1}.
\]
Taking an exponential moment, \Cref{lem:Lone-moment-torus} implies 
\[
\E_{X\sim \P_{N,\beta}}\bigl[ \exp\bigl(\frac12 \beta \frac{|\langle \phi,\mu_X\rangle|}{\|\Delta\phi\|_{L^\infty}}\bigr) \bigr]
\leq e^{\beta m_{\rm pot}} e^{-\frac{2\beta}{N} \min_X \mc{H}_N(X)} + 1.
\]
Thus
\begin{align}\label{eq:GeneralFlucEstTorus}
\P_{N,\beta}\bigl[|\langle \phi,\mu_X\rangle| \geq \lambda\|\Delta\phi\|_{L^\infty} \bigr] \leq e^{\beta (m_{\pot} - \frac{2}{N} \min_X \mc H_N(X) - \frac{1}{2}\lambda)} + e^{-\frac{1}{2}\beta \lambda}.
\end{align}
In order to use this estimate, we need to lower bound the ground state energy. The following lemma provides this bound for the Coulomb interaction. 
\begin{lemma}
    \label{lem:min-energy-torus}
    For the Coulomb interaction $g_{\Coulomb}$ defined in~\eqref{eq:g-torus-def} in $d\geq 3$, we have
    \[
    \min_X \mc{H}_N(X) \geq -C_{\pot} N^{2-\frac2d}.
    \]
\end{lemma}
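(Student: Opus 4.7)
The plan is to use the Onsager smearing technique.  Fix $\eta > 0$ to be optimized, and let $\delta^{(\eta)}_0$ denote the uniform probability measure on the sphere $\partial B_\eta(0) \subset \T^d$.  Define the smeared empirical measure $\mu_X^{(\eta)} := \sum_{j=1}^N \delta^{(\eta)}_{x_j}$ and its potential $h^{(\eta)} := P\mu_X^{(\eta)}$, which satisfies $-\Delta h^{(\eta)} = \mu_X^{(\eta)} - N$ on $\T^d$.  Since \ref{assT:3} implies $h^{(\eta)}$ has mean zero, integration by parts yields the nonnegativity identity
\[
0 \leq \int_{\T^d} |\nabla h^{(\eta)}|^2\,dx = \iint g(x-y)\,d\mu_X^{(\eta)}(x)\,d\mu_X^{(\eta)}(y) = \sum_{i,j=1}^N g_{\eta\eta}(x_i - x_j),
\]
where $g_{\eta\eta} := g \ast \delta^{(\eta)}_0 \ast \delta^{(\eta)}_0$ is the doubly-smeared interaction.

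Two properties of $g_{\eta\eta}$ need to be established.  First, the asymptotic \eqref{eq:GreenFuncAsymptotic_Torus} gives the diagonal estimate $g_{\eta\eta}(0) \leq C_d \eta^{2-d}$.  Second and more importantly, the pointwise bound
\[
g_{\eta\eta}(z) \leq g(z) + \frac{\eta^2}{d} \qquad \text{for all } z \in \T^d.
\]
When the toroidal distance from $z$ to $0$ exceeds $2\eta$, this is in fact an equality, derived by applying the mean-value formula for functions satisfying $\Delta u = 1$ twice (once for $g_\eta$, and again for $g_{\eta\eta}$).  When $|z| \leq 2\eta$, the argument is more delicate: the locally-defined function $w \mapsto g(w) - |w|^2/(2d)$ satisfies $\Delta = -\delta_0$ and so is superharmonic near the origin, and the superharmonic mean-value inequality combined with the second-moment computation $\int |w|^2\,d(\delta^{(\eta)}_0 \ast \delta^{(\eta)}_0)(w) = 2\eta^2$ gives the stated bound.

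Given these two estimates, the conclusion is quick.  Substituting the pointwise bound into the off-diagonal terms of the sum above,
\[
0 \leq N g_{\eta\eta}(0) + \sum_{i \neq j}\bigl[g(x_i - x_j) + \eta^2/d\bigr] = N g_{\eta\eta}(0) + 2\mc{H}_N(X) + \frac{N(N-1)}{d}\eta^2,
\]
so $\mc{H}_N(X) \geq -\tfrac{N}{2}\,g_{\eta\eta}(0) - \tfrac{N(N-1)}{2d}\,\eta^2$.  Choosing $\eta = N^{-1/d}$ balances the two terms and yields $\mc{H}_N(X) \geq -C_{\pot} N^{2-2/d}$.  The choice requires $\eta < 1/2$ so that $B_{2\eta}(0)$ fits inside a fundamental domain; for the finitely many small $N$ that are excluded, the trivial bound $\mc{H}_N \geq -\binom{N}{2} m_{\pot}$ from \ref{assT:4} suffices after adjusting constants.

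The main technical obstacle is the pointwise bound $g_{\eta\eta}(z) \leq g(z) + \eta^2/d$ when $|z| \leq 2\eta$, close to the singularity.  In $\R^d$ the Riesz kernel is itself superharmonic and the bound follows immediately from the mean-value inequality, but on $\T^d$ the auxiliary function $g - |w|^2/(2d)$ is only locally defined, so the superharmonic mean-value argument has to be executed in a neighborhood of the origin.  This localization is routine once $\eta$ is sufficiently small, but it is the place where the torus geometry enters and the details require some care.
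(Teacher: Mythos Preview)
Your proof is correct and follows essentially the same Onsager smearing strategy as the paper: positivity of the Coulomb quadratic form applied to a smeared empirical measure, the off-diagonal bound $g_{\eta\eta}\le g+C\eta^2$ coming from $\Delta g\le 1$ (equivalently, superharmonicity of $g-|x|^2/2d$), and the diagonal self-energy bound $g_{\eta\eta}(0)\le C\eta^{2-d}$, followed by the choice $\eta=N^{-1/d}$. The only cosmetic differences are that the paper smears with the normalized ball indicator $\gamma_r$ rather than the sphere measure, and phrases positivity via the nonnegativity of the Fourier coefficients of $g_{\Coulomb}$ rather than via $\int|\nabla h^{(\eta)}|^2\ge 0$; your more explicit treatment of the torus-localization issue for the superharmonicity argument is in fact a point the paper leaves implicit.
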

Plugging this Lemma into \eqref{eq:GeneralFlucEstTorus} yields 
\[
\P_{N,\beta}\bigl[|\langle \phi,\mu_X\rangle| \geq \lambda\|\Delta\phi\|_{L^\infty} \bigr] \leq e^{\beta (m_{\pot} + 2C_{\pot} N^{1-\frac{2}{d}} - \frac{1}{2}\lambda)} + e^{-\frac{1}{2}\beta \lambda}.
\]
By taking $\lambda = A N^{1-2/d}$ with $A$ large enough, we obtain Theorem~\ref{thm:mainTorus}.


\subsection{Proof of Lemma~\ref{lem:Lone-moment-torus}}
A key idea in the proof is to look at the contributions to the total energy
$\mc{H}_N$ made by each individual particle.  The contribution from each particle cannot be too large. If one particle experiences a lot of energy from nearby particles, it would prefer to move to a location where the potential field is smaller. Intuitively, one can imagine a Glauber dynamics where each particle is individually resampled while the others remain fixed. In our proof we will use exponential moments instead of Glauber dynamics. 

We define for each $j \in \{1,\ldots,N\}$ the configuration $X_{\hat j}$ of all particles excluding $x_j$, 
\begin{align*}
    X_{\hat j} &= (x_1, \ldots, x_{j-1}, x_{j+1}, \ldots, x_N) \\ 
    \mu_{X,\hat j} &= \sum_{k\neq j} \delta_{x_j}.
\end{align*}
The local energy experienced by the point $x_j$ is $P\mu_{X,\hat{j}}(x_j)$, that is
\[
P \mu_{X,\hat{j}} (x_j) = \sum_{k\not=j} g(x_k,x_j).
\]
The total energy is equal to the local energy at $x_j$ plus the interaction energy of the remaining $N-1$ particles, which does not depend on $x_j$:
\begin{align*}
    \mc{H}_N(X) = P\mu_{X,\hat j}(x_j) + \mc{H}_{N-1}(X_{\hat j}).
\end{align*}
The total energy can also be written as a sum of local energies,
\begin{equation}\label{eq:sum_local_energies}
    \mc H_N(X) = \frac{1}{2}\sum_{1\leq j \leq n} P\mu_{X,\hat j}(x_j)\qquad \text{for $1\leq j \leq N$.}
\end{equation}
 
\noindent The conditional probability of sampling $X = (x_1, \ldots, x_{j-1}, x, x_{j+1}, \ldots, x_N)$ after fixing all but the $j$th coordinate is a Gibbs measure in terms of the local energy at $x$,
\begin{align*}
    d\P_{N,\beta\,|\,X_{\hat j}}(x) = \frac{e^{-\beta P\mu_{X,\hat j}(x)}dx}{\int e^{-\beta P\mu_{X,\hat j}(y)}\, dy}. 
\end{align*}
Using this conditional measure, we compute the exponential moment 
\begin{align*}
    \E_{x\sim \P_{N,\beta\,|\,X_{\hat j}}}[e^{\beta P\mu_{X, \hat j}(x)}] &= \frac{1}{\int e^{-\beta P\mu_{X, \hat j}(x)}\, dx}. 
\end{align*}


\noindent Let $\hat{\P}_{N,\beta}$ be the $(N-1)$-particle marginal distribution for
$\hat{X} = (x_1,\cdots,x_{N-1})$ (which is the same as the distribution of $X_{\hat{j}}$).  That is, $\hat{\P}_{N,\beta}$ is the measure satisfying
\[
\int f(\hat{X})\,d\hat{\P}_{N,\beta}(\hat{X})
:= \int f(\hat{X}) \,d\P_{N,\beta}(X).
\]
Integrating the exponential moment bound against this measure we have
\begin{align*}
    \int \Bigl(\int e^{-\beta P\mu_{X, \hat j}(x)}\,dx\Bigl) \E_{x\sim \P_{x,N,\beta|X_{\hat j}}}\Bigl[e^{\beta P\mu_{X,\hat{j}}(x)}\Bigl] d\hat{\P}_{N,\beta}(X_{\hat j}) = 1,
\end{align*}
which by the definition of the marginal measure is equivalent to the identity
\[
\E_{X\sim \P_{N,\beta}} \Bigl
[\Bigl(\int e^{-\beta P\mu_{X, \hat j}(x)}\,dx\Bigl) e^{\beta P\mu_{X,\hat{j}}(x)}
\Bigl] = 1.
\]
It is useful to replace the restricted potential field
$P \mu_{X,\hat j}$ by the full potential field $P\mu_X$.  To
do so we use that the potential is bounded from below by $-m_{\pot}$,
\[
\int e^{-\beta P\mu_X(x)}\,dx
= \int e^{-\beta g(x_j,x)} e^{-\beta P\mu_{X,\hat j}(x)} \,dx
\leq e^{\beta m_{\rm pot}} \int e^{-\beta P\mu_{X,\hat j}(x)}\,dx.
\]
Thus
\[
\E_{X\sim \P_{N,\beta}} \Bigl
[\Bigl(\int e^{-\beta P\mu_{X}(x)}\,dx\Bigl) e^{\beta P\mu_{X,\hat{j}}(x)}
\Bigl]\leq e^{\beta m_{\pot}}.
\]
The left hand side does not depend on the choice of $j$. Summing over $1\leq j \leq N$ and using linearity of expectation as well as Jensen's inequality, 
\begin{align*}
    e^{\beta m_{\rm pot}} &\geq \E_{X\sim \P_{N,\beta}} \Bigl[\Bigl(\int e^{-\beta P\mu_X(x)}\, dx\Bigr)\frac{1}{N}\sum_{1\leq j \leq N} e^{\beta P\mu_{X,\hat j}(x)} \Bigr] \\ 
    &\geq \E_{X\sim \P_{N,\beta}} \Bigl[\Bigl(\int e^{-\beta P\mu_X(x)}\, dx\Bigr)\exp(\frac{1}{N}\sum_{1\leq j \leq N}\beta P\mu_{X,\hat j}(x)) \Bigr] && \text{(Jensen's inequality),}\\ 
    &= \E_{X\sim \P_{N,\beta}} \Bigl[\Bigl(\int e^{-\beta P\mu_X(x)}\, dx\Bigr)\exp(\frac{2\beta}{N} \mc H_N(X)) \Bigr] && \text{(Eq. \eqref{eq:sum_local_energies})}.
\end{align*}
Rearranging, we obtain
\begin{equation}\label{eq:probabilistic_bd_torus}
    \E_{X\sim \P_{N,\beta}} \Bigl[\int e^{-\beta P\mu_X(x)}\, dx\Bigr] \leq e^{\beta m_{\pot}}e^{-\frac{2\beta}{N} \min_X \mc H_N(X)}.
\end{equation}
To obtain the desired $L^1$ bound for $P\mu_X$, we look at the negative part of $P\mu_X$,
$(P\mu_X)_{-}(x) = \min\{P\mu_X(x), 0\}$.  Since $g$ is mean zero it follows also that $\int P\mu_X(x)\, dx = 0$.  In particular,
\begin{align*}
    \int (P\mu_X)_{-}(x)\, dx = -\frac{1}{2} \| P\mu_X \|_1. 
\end{align*}
We estimate the integrand of \eqref{eq:probabilistic_bd_torus} using  
$e^{-\beta P\mu_X(x)}\geq e^{-\beta (P\mu_X)_-(x)} - 1$
\begin{align*}
    \int e^{-\beta P\mu_X(x)}\, dx &\geq \int e^{-\beta (P\mu_X)_{-}(x)}\, dx - 1 \\ 
    &\geq \exp(-\beta \int (P\mu_X)_- dx) - 1 && \text{(Jensen's inequality)}\\ 
    &= e^{\frac{\beta}{2} \| P\mu_X \|_1} - 1.
\end{align*}
Thus \eqref{eq:probabilistic_bd_torus} implies that 
\begin{equation}\label{eq:exponential_moment_L1}
    \E[e^{\frac{\beta}{2} \| P\mu_X \|_1}] \leq e^{\beta(-\frac{2}{N} \min_X \mc H_N(X)+m_{\pot})}+1,
\end{equation}
as desired.

\subsection{The minimum energy configuration}
In this section we prove Lemma~\ref{lem:min-energy-torus} lower bounding the ground state energy for the Coulomb interaction.  For each $0<r<1$ let $\gamma_r = |B_r|^{-1} {\bf 1}_{|x|\leq r}$ be the
$L^1$-normalized indicator function of the ball of radius $r$. 

We use the following bounds for the Coulomb potential:
\begin{align}
    \gamma_r \ast g_{\Coulomb} &\leq g_{\Coulomb} + C_d r^2 \label{eq:subharm} \\
    \gamma_r \ast g_{\Coulomb} &\leq C_d r^{2-d} \label{eq:growth} \\
    \langle \phi, g_{\Coulomb}\ast \phi\rangle &\geq 0\qquad \text{for any $\phi \in H^{-1}(\T^d)$} \label{eq:positivity}.
\end{align}
The first bound follows from the fact that $\Delta g \leq 1$, the second is an explicit calculation using the asymptotic \eqref{eq:GreenFuncAsymptotic_Torus} (and is only valid for $d\geq 3$),
and the third is equivalent to the positivity of the Fourier coefficients of $g$.  

For any configuration $X=(x_1,\cdots,x_N)$ we have
\begin{align*}
0 &\leq  \langle \gamma_r \ast \mu_X, g \ast (\gamma_r * \mu_X)\rangle \\
&= \langle \mu_X, (g\ast \gamma_r\ast \gamma_r) \ast \mu_X\rangle \\
&= \sum_{i\not=j} (g\ast \gamma_r \ast \gamma_r)(x_i,x_j) + N (g\ast \gamma_r\ast \gamma_r)(0,0)\\
&\leq \Big(\sum_{i\not= j} g(x_i,x_j)\Big) + C_d (N^2 r^2 + Nr^{2-d}).
\end{align*}
Taking $r=N^{-1/d}$ and observing that the sum in parentheses is proportional to $\mc{H}_N(X)$,
we conclude that for any $X$ we have
\[
-C_d N^{2-2/d} \leq \mc{H}_N(X),
\]
as desired. 

\section{Euclidean and external potential}
\label{sec:Rd}
We prove a similar bound on macroscopic fluctuations in the case of particles confined by an external potential. See \Cref{thm:fluctuations_Rd} for the result. 

\subsection{Setup}
We work in $d\geq 3$, and take $g(x) = |x|^{2-d}$ to be the Coulomb potential.


Let $V: \R^d \to \R$ be an external potential. For $X = (x_1, \ldots, x_N)$ a configuration of $N$ points, the total energy is 
\begin{align*}
    \mc H_N(X) := \frac{1}{2}\sum_{j\neq k}g(x_j - x_k) + N\sum_{1\leq j \leq N} V(x_j). 
\end{align*}
We consider the Gibbs measure 
\[
d\P_{N,\beta}(X) = \frac{1}{Z_{N,\beta}} e^{-\beta \mc H_N(X)}\, dX
\]
where the partition function 
\[
Z_{N,\beta} = \int_{(\R^d)^N} e^{-\beta \mc H_N(X)}\, dX
\]
is bounded as long as $\int \exp(-V(x))\, dx < \infty$. 
The energy of a non-atomic measure $\mu$ with bounded variation is 
\begin{align*}
    \mc E(\mu) := \frac{1}{2}\int \int g(x-y)\, d\mu(x) d\mu(y) + \langle 1, \mu\rangle \int V(x)\, d\mu(x).
\end{align*}
We define the total potential field $P\mu$ generated by a measure $\mu$, 
\begin{align*}
    P\mu(x) := \int g(x-y) d\mu(y) + \langle V, \mu\rangle + \langle 1, \mu\rangle V(x).
\end{align*}
Note that $P$ is the self-adjoint operator (on $L^2(\R^d)$) satisfying
\begin{align*}
    \mc E(\mu) = \frac{1}{2}\langle P\mu, \mu\rangle.
\end{align*}

The following theorem describes measures minimizing the energy functional. It is due to Frostman, see \cite{Serfaty}*{Theorem 2.1}.
\begin{theorem}\label{thm:FrostmanThm}
Assume $V$ is lower semi-continuous and bounded below, $\lim_{|x|\to \infty} V(x) = \infty$,  and $\{x\in \R^d\, :\, V(x) < +\infty\}$ has positive capacity. Then the minimum of $\mc E(\mu)$ over Borel probability measures exists, is
finite, and is achieved by a unique $\mu_V$ which has compact support. Moreover, letting $\zeta = P\mu_V$, 
\begin{align*}
    \begin{cases}
        \zeta \geq 2\mc E(\mu_V) & \text{almost everywhere on $\R^d$} \\ 
        \zeta = 2\mc E(\mu_V) & \text{almost everywhere on $\supp \mu_V$}.
    \end{cases}
\end{align*}
\end{theorem}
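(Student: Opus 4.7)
The plan is the classical direct method in the calculus of variations, broken into four steps: (i) existence of a minimizer by compactness and lower semi-continuity, (ii) uniqueness from strict convexity of the interaction quadratic form, (iii) compactness of the support from coercivity of $V$, and (iv) the characterization of $\zeta = P\mu_V$ from first-order optimality.

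For existence I first check $\inf \mc E < \infty$: assumption~\ref{ass:V_3} lets me mollify the indicator of a positive-capacity set on which $V$ is bounded, producing a compactly supported probability measure of finite self-energy. Now let $\mu_n$ be a minimizing sequence; because $g \geq 0$, boundedness of $\mc E(\mu_n)$ yields a uniform bound on $\int V \, d\mu_n$, and \ref{ass:V_2} combined with Chebyshev gives tightness, so Prokhorov extracts a weak-$*$ limit $\mu_V$. To conclude $\mu_V$ is a minimizer I need lower semi-continuity of $\mc E$. The confining term is l.s.c.\ by Portmanteau applied to the l.s.c.\ function $V$. For the interaction I truncate: $g_M := \min(g, M)$ is bounded and continuous on $\R^d$, so $\iint g_M \, d\mu_n d\mu_n \to \iint g_M \, d\mu_V d\mu_V$ by weak continuity, and monotone convergence as $M \to \infty$ upgrades to $\liminf_n \iint g \, d\mu_n d\mu_n \geq \iint g \, d\mu_V d\mu_V$.

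For uniqueness, the Coulomb kernel has Fourier transform $\widehat g(\xi) \propto |\xi|^{-2} \geq 0$, so by Plancherel $\iint g(x-y) \, d\sigma(x) d\sigma(y) \geq 0$ for signed $\sigma$ of bounded energy, with equality only when $\sigma \equiv 0$. Applied to $\sigma = \mu_1 - \mu_2$ for two putative minimizers (total mass zero), this gives strict midpoint convexity of $\mc E$ and forces $\mu_1 = \mu_2$. For compact support, fix $R$ large enough that $V(x) > 2\mc E(\mu_V) + 1$ for $|x| > R$; the Euler--Lagrange identity below, evaluated at any $x \in \supp \mu_V$ with $|x| > R$, would be violated since $\zeta(x) \geq V(x) \gg 2\mc E(\mu_V)$, so $\supp \mu_V \subset B_R$.

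The Euler--Lagrange step is the heart of the matter. For any signed measure $\sigma$ with $\sigma(\R^d) = 0$, support in $\{V < \infty\}$, and $\mu_V + \eps \sigma \geq 0$ for small $\eps > 0$, expanding
\[
\mc E(\mu_V + \eps \sigma) = \mc E(\mu_V) + \eps \langle g * \mu_V + V, \sigma\rangle + \tfrac{\eps^2}{2}\iint g \, d\sigma \, d\sigma,
\]
and using $\sigma(\R^d) = 0$ to rewrite the first-order coefficient as $\langle \zeta, \sigma\rangle$, minimality yields $\langle \zeta, \sigma\rangle \geq 0$. Choosing $\sigma$ that shifts a small mass from $\supp \mu_V$ onto a ball around a putative point $x_0$ where $\zeta(x_0) < 2\mc E(\mu_V)$ produces a contradiction, so $\zeta \geq 2 \mc E(\mu_V)$ a.e. Pairing this pointwise inequality with the identity $\langle \zeta, \mu_V\rangle = 2\mc E(\mu_V)$ forces equality $\mu_V$-almost everywhere on $\supp \mu_V$, proving the second claim. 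The main obstacle I anticipate is the lower semi-continuity step: the Coulomb singularity leaves open the worry that a minimizing sequence could concentrate mass into a Dirac, but the finite-self-energy bound and the $g_M$ truncation handle this cleanly, since a concentrated Dirac would give $\iint g_M \, d\mu d\mu = M$ for every $M$, which is incompatible with a bounded energy functional.
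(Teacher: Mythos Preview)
The paper does not give its own proof of this theorem; it is quoted as a classical result due to Frostman and attributed to \cite{Serfaty}*{Theorem 2.1}. Your sketch follows exactly the direct-method-plus-Euler--Lagrange argument that appears in that reference, so in that sense you have reproduced the standard proof that the paper is citing.

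A couple of small points on the sketch itself. First, the compact-support step invokes the Euler--Lagrange characterization ``below'', which is fine logically (the first-order conditions do not require compact support to derive), but the presentation would be cleaner if you proved the variational inequalities first and deduced compact support afterward. Second, the inequality $\zeta(x) \geq V(x)$ as written is not quite right: what you actually have is $\zeta(x) = g*\mu_V(x) + \langle V,\mu_V\rangle + V(x) \geq V(x) + \langle V,\mu_V\rangle$, since $g\geq 0$; the additive constant is harmless for the argument but should be carried. Third, in the Euler--Lagrange step the perturbation $\sigma = \delta_{x_0} - (\text{mass on }\supp\mu_V)$ has infinite self-energy, so the quadratic expansion is formally problematic; the clean way is to take $\sigma = \nu - \mu_V$ for $\nu$ uniform on a small ball, obtain $\langle \zeta,\nu\rangle \geq \langle \zeta,\mu_V\rangle = 2\mc E(\mu_V)$, and then shrink the ball. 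None of these affect the correctness of the overall outline.
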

For example, if $d = 3$ and $g(x) = |x|^{-1}$ is the Coulomb potential and $V(x) = \frac{1}{2}|x|^2$ is the quadratic potential, then $\mu_V$ is the Lebesgue probability measure on a ball centered at the origin. 
Following Serfaty~\cite{Serfaty}, we assume $V$ satisfies the following properties. The assumptions simplify compared to \cite{Serfaty} because we assumed $g \geq 0$. 
\begin{enumerate}[label=(B\arabic*)]
    \item $V$ is lower semi-continuous and bounded below, \label{ass:V_1}

    \item $\lim_{|x|\to \infty} V(x) = \infty$, \label{ass:V_2}

    \item $\{x\in \R^d\, :\, V(x) < +\infty\}$ has positive capacity, \label{ass:V_3} 

    \item $\int_{\R^d} \exp(-V(x))\, dx < \infty$, \label{ass:V_4}

    \item The equilibrium measure is bounded in $L^{\infty}$, that is, $\mu_V = \mu_V(x) dx$ for some $\mu_V(x) \in L^{\infty}(\R^d)$. \label{ass:V_5}

\end{enumerate}
Notice that \ref{ass:V_1}--\ref{ass:V_3} are the hypotheses of \Cref{thm:FrostmanThm}.
\noindent 
We change $V$ by a constant if necessary so that $\mc E(\mu_V) = 0$. We define 
\begin{align*}
    \Sigma := \supp \mu_V.
\end{align*}
Recall
\begin{align*}
    \zeta := P\mu_V = g * \mu_V + \langle V, \mu\rangle + V.
\end{align*}
Then
\begin{itemize}
    \item $\zeta = 0$ on $\Sigma$ due to our normalization $\mc E(\mu_V) = 0$.
    \item $g * \mu_V \to 0$ as $x \to \infty$ because $\mu_V$ has compact support and $g$ decays to zero.
    \item $g * \mu_V$ is bounded above due to the hypothesis that $\mu_V$ has bounded density.
    \item It follows from the above fact that 
    \begin{align}\label{eq:zeta_like_V}
        |\zeta(x) - V(x)| \leq C
    \end{align}
    for some constant $C$. 
\end{itemize}
We consider the following normalized ground state energy, 
\begin{align}\label{eq:L_N_defn}
    L_N := \min_X ( \mc H_N(X) - N \langle \zeta, \mu_X\rangle ),
\end{align} 
where we penalize points of $X$ lying outside of $\Sigma$ via the effective potential $\zeta$. Serfaty provides the following lower bound on $L_N$. 
\begin{proposition}[\cite{Serfaty}*{Corollary 5.5}]\label{prop:energy_lower_bd}
For the Coulomb interaction, the ground state energy $L_N$ satisfies
\begin{align*}
    L_N \geq - C  N^{2-2/d} \| \mu_V \|_{L^{\infty}}^{\frac{d-2}{d}} 
\end{align*}
\end{proposition}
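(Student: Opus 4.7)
The plan is to mirror the torus proof of Lemma~\ref{lem:min-energy-torus}, but applied to the centered empirical measure $\mu_X - N\mu_V$, so that the external potential gets absorbed into a nonnegative quadratic form.

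First I would rewrite the normalized energy in an explicitly positive form. Using $\zeta = g\ast\mu_V + V + \langle V,\mu_V\rangle$ together with the normalization $\mc{E}(\mu_V)=0$, which gives $\langle V,\mu_V\rangle = -\tfrac{1}{2}\langle g\ast\mu_V,\mu_V\rangle$, a direct computation cancels the $NV(x_j)$ contributions and yields
\[
\mc{H}_N(X) - N\langle\zeta,\mu_X\rangle
= \tfrac{1}{2}\sum_{i\neq j} g(x_i-x_j) - N\sum_i (g\ast\mu_V)(x_i) + \tfrac{N^2}{2}\langle g\ast\mu_V,\mu_V\rangle.
\]
Aside from the missing diagonal, this is exactly $\tfrac{1}{2}\iint g(x-y)\,d(\mu_X-N\mu_V)(x)\,d(\mu_X-N\mu_V)(y)$, which is nonnegative by positive-definiteness of $g$ on $\R^d$.

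Second, to treat the divergent diagonal I would smear at scale $r$, as on the torus: with $\gamma_r = |B_r|^{-1}\mathbf{1}_{B_r}$ and $G_r := \gamma_r\ast g\ast\gamma_r$, positive-definiteness gives
\[
0 \leq \iint G_r(x-y)\,d(\mu_X-N\mu_V)(x)\,d(\mu_X-N\mu_V)(y).
\]
Isolating $\tfrac{1}{2}\sum_{i\neq j} G_r(x_i-x_j)$ and then replacing $G_r$ by $g$ using superharmonicity ($G_r \leq g$ pointwise), I can bound $\tfrac{1}{2}\sum_{i\neq j} g(x_i-x_j)$ from below. Substituting this back into the identity above, the linear-in-$\mu_X$ terms combine to $-N\sum_i((g-G_r)\ast\mu_V)(x_i)$ and the quadratic term to $\tfrac{N^2}{2}\langle(g-G_r)\ast\mu_V,\mu_V\rangle \geq 0$, so only the diagonal and the single-sum smearing error survive. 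The essential new input is that since $g$ is harmonic off the origin, $g-G_r$ is compactly supported in $B_{2r}(0)$ with $\int(g-G_r)\leq C_d r^2$; combined with hypothesis~\ref{ass:V_5} this yields the pointwise bound $(g-G_r)\ast\mu_V \leq C_d r^2\|\mu_V\|_{L^\infty}$. Together with $G_r(0)\leq C_d r^{2-d}$, I obtain
\[
\mc{H}_N(X) - N\langle\zeta,\mu_X\rangle \geq -C_d N r^{2-d} - C_d N^2 r^2\|\mu_V\|_{L^\infty},
\]
and the choice $r = (N\|\mu_V\|_{L^\infty})^{-1/d}$ balances the two terms and gives the stated bound $-C N^{2-2/d}\|\mu_V\|_{L^\infty}^{(d-2)/d}$.

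The main obstacle I anticipate is the bookkeeping in the first step: the linear-in-$\mu_X$ contribution coming from $V-\zeta$ and the single-sum error produced by the smearing against $\mu_V$ must combine cleanly, and this is precisely where the Frostman normalization $\mc{E}(\mu_V)=0$ (equivalently $\zeta\equiv 0$ on $\Sigma$) is used—otherwise sign-indefinite boundary terms involving $V$ would remain. Compared to the torus argument, the fact that $g$ is genuinely harmonic on $\R^d\setminus\{0\}$—rather than merely subharmonic modulo a constant as in~\eqref{eq:subharm}—replaces the $+C_d r^2$ correction by the cleaner compact-support statement for $g-G_r$, which is exactly what lets the $L^\infty$ bound on $\mu_V$ enter the final constant.
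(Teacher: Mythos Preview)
The paper does not give its own proof of this proposition; it simply quotes the bound from Serfaty's lecture notes \cite[Corollary~5.5]{Serfaty}. Your proposal supplies a self-contained argument, and it is correct. It is precisely the Euclidean analogue of the paper's torus argument for Lemma~\ref{lem:min-energy-torus}, with two substantive additions: (i) you center the empirical measure by $N\mu_V$ so that the confining potential is absorbed into a nonnegative quadratic form (your algebraic identity in the first step is exact, using only $\mc{E}(\mu_V)=0$), and (ii) you exploit Newton's theorem---harmonicity of $g$ away from the origin---to conclude that $g-G_r$ is supported in $B_{2r}$ with $\int(g-G_r)\leq C_d r^2$, which is what lets $\|\mu_V\|_{L^\infty}$ enter cleanly via $(g-G_r)\ast\mu_V\leq C_d r^2\|\mu_V\|_{L^\infty}$. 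The optimization $r=(N\|\mu_V\|_{L^\infty})^{-1/d}$ then balances the diagonal term $Nr^{2-d}$ against the smearing error $N^2 r^2\|\mu_V\|_{L^\infty}$ exactly as claimed. This is essentially the smearing/splitting argument behind Serfaty's Corollary~5.5, so you have reconstructed the cited proof rather than found an alternative route.

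One small remark on your ``main obstacle'' paragraph: the identity in your first step is purely algebraic and valid for \emph{all} configurations $X$, regardless of whether the $x_i$ lie in $\Sigma$; you never use $\zeta\equiv 0$ on $\Sigma$. The normalization $\mc{E}(\mu_V)=0$ enters only as $\langle V,\mu_V\rangle=-\tfrac12\langle g\ast\mu_V,\mu_V\rangle$, which fixes the sign of the constant term. So the bookkeeping is actually cleaner than you feared.
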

This bound serves as the Euclidean analogue of our torus bound on the minimum energy (Lemma~\ref{lem:min-energy-torus}). The difference is the term $N \langle \zeta, \mu_X\rangle$ added to the 
energy.  Fortunately, Serfaty also provides a tail bound for this quantity.
\begin{proposition}[\cite{Serfaty}*{Corollary 5.26}]\label{prop:confinement_by_zeta}
Assume \ref{ass:V_1}-\ref{ass:V_4} so that $\mu_V$ exists and is compactly supported. Assume also that $\mu_V$ has a bounded density. Then for all $\beta > 0$ we have 
\begin{align*}
    \Bigl|\log \E_{P_{N,\beta}} \Bigl[\exp \frac{1}{2}\beta N \langle \zeta, \mu_X\rangle\Bigr]\Bigr| \leq C \beta  N^{2-2/d} + C_{\zeta} N 
\end{align*}
where $C > 0$ depends only on $d, \|\mu_V\|_{L^{\infty}}$ and $C_{\zeta}$ depends on $\zeta$ and \ref{ass:V_4}. 
\end{proposition}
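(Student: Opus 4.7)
The plan is to apply the splitting $\mc H_N(X) = N\langle\zeta,\mu_X\rangle + R(X)$ with $R(X) := \mc H_N(X) - N\langle\zeta,\mu_X\rangle$; by \Cref{prop:energy_lower_bd}, $R(X) \geq L_N \geq -CN^{2-2/d}$ for every $X$. This rewrites the exponential moment as a ratio of partition functions,
\[
\E_{\P_{N,\beta}}\bigl[\exp\bigl(\tfrac12\beta N\langle\zeta,\mu_X\rangle\bigr)\bigr] = \frac{\tilde Z_{N,\beta}}{Z_{N,\beta}}, \qquad \tilde Z_{N,\beta} := \int e^{-\beta R(X) - \frac12 \beta N \sum_j \zeta(x_j)}\,dX.
\]
Because $\zeta \geq 0$ pointwise, this ratio is automatically $\geq 1$, so the lower bound on $|\log \cdot|$ is free and I would concentrate on the matching upper bound.

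For the upper bound on $\tilde Z_{N,\beta}$, insert $R \geq -CN^{2-2/d}$ and factorize the resulting product:
\[
\tilde Z_{N,\beta} \leq e^{\beta C N^{2-2/d}}\Bigl(\int_{\R^d} e^{-\frac12 \beta N \zeta(x)}\,dx\Bigr)^N.
\]
Using \eqref{eq:zeta_like_V} to compare $\zeta$ with $V$, and then \ref{ass:V_4}, the one-particle integral is bounded by $e^{C_\zeta}$ once $\beta N \gtrsim 1$, so $\tilde Z_{N,\beta} \leq \exp(\beta C N^{2-2/d} + C_\zeta N)$.

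For the matching lower bound on $Z_{N,\beta}$, I would exhibit a set $K$ of ``good'' configurations of volume $\geq e^{-CN}$ on which $\mc H_N \leq C N^{2-2/d}$. Using $\mu_V \in L^\infty$ from \ref{ass:V_5}, partition $\Sigma$ into $N$ disjoint cells $B_j$ each of $\mu_V$-mass $1/N$ and Lebesgue measure $\gtrsim N^{-1}$, then shrink each $B_j$ to an inner sub-cell $B_j' \subset B_j$ of volume $\geq c/N$ with pairwise distance $\geq c N^{-1/d}$ between distinct $B_j'$, $B_k'$. Set $K := \bigcup_{\sigma \in S_N} \prod_j B_{\sigma(j)}'$; Stirling gives $|K| = N! \prod_j |B_j'| \geq e^{-CN}$. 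Every $X \in K$ lies in $\Sigma^N$, so $\langle\zeta,\mu_X\rangle = 0$ and $R(X) = \mc H_N(X)$; a standard Riemann-sum estimate (splitting $\mu_X - N\mu_V$ into zero-mean pieces supported in each $B_j$ and using positivity of the Coulomb kernel together with uniform $N^{-1/d}$ separation) shows $\mc H_N(X) = N^2 \mc E(\mu_V) + O(N^{2-2/d}) = O(N^{2-2/d})$, the leading $N^2$-order terms cancelling by our normalization $\mc E(\mu_V) = 0$. Therefore $Z_{N,\beta} \geq \int_K e^{-\beta \mc H_N}\,dX \geq \exp(-\beta C N^{2-2/d} - CN)$, and dividing proves the proposition.

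The main technical step is the Riemann-sum estimate $\mc H_N \leq C N^{2-2/d}$ on $K$: while the leading $N^2$ piece cancels exactly by normalization, extracting the $N^{2-2/d}$ remainder demands a careful multipole-style pairing of the Coulomb sum on a well-separated discretization of $N \mu_V$, exploiting both positivity of $g$ and the $L^\infty$ bound on $\mu_V$. An alternative is to invoke an upper bound on $\min_X \mc H_N$ from Serfaty's book and take $K$ to be an $N^{-1/d}$-scale neighborhood of a near-minimizer, which sidesteps the explicit discretization.
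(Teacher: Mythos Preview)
The paper does not prove this proposition: it is quoted directly from Serfaty's book as \cite{Serfaty}*{Corollary~5.26} and used as a black box. So there is no ``paper's own proof'' to compare against. Your outline is, however, essentially the standard route by which such free-energy/partition-function bounds are established (and is close in spirit to how it is done in \cite{Serfaty}): rewrite the exponential moment as a ratio $\tilde Z_{N,\beta}/Z_{N,\beta}$, bound $\tilde Z_{N,\beta}$ from above using the deterministic lower bound $R(X)\geq L_N\geq -CN^{2-2/d}$ from \Cref{prop:energy_lower_bd} together with the one-particle integrability of $e^{-\zeta}$, and bound $Z_{N,\beta}$ from below by exhibiting a set of well-separated configurations in $\Sigma^N$ with energy $O(N^{2-2/d})$.

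Two remarks on the details. First, your bound $\int e^{-\frac12\beta N\zeta}\,dx \leq e^{C_\zeta}$ genuinely requires $\beta N\gtrsim 1$: for very small $\beta N$ this integral can be as large as a negative power of $\beta N$ (think $\zeta\sim |x|^2$ at infinity), and its $N$-th power then contributes $\sim N\log N$ to the exponent, overshooting the stated $C_\zeta N$. Since the paper only invokes \Cref{prop:confinement_by_zeta} under the standing hypothesis $\beta\geq 1/(N-1)$, this is harmless for the application, but the ``for all $\beta>0$'' in the proposition would need a separate (easy) argument in the high-temperature regime. Second, your alternative for the lower bound on $Z_{N,\beta}$---take an $N^{-1/d}$-neighborhood of a near-minimizer and use the matching upper bound $\min_X \mc H_N \leq CN^{2-2/d}$ from \cite{Serfaty}---is indeed the cleaner path and avoids the geometric regularity issues (shape of the cells $B_j$, existence of well-separated sub-cells) that your discretization sketch leaves open.
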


\subsection{Fluctuations of the potential}
Our main estimate is a bound on the exponential moment of the potential.
\begin{proposition}\label{prop:exponential_moment_estimate}
Assume \ref{ass:V_1}--\ref{ass:V_5}. If $\beta \geq 1/(N-1)$,
\begin{align*}
    \E_{X\sim \P_{N,\beta}}\Bigl[\Bigl(\int e^{-\beta P\mu_{X}(x) }\, dx\Bigr)e^{\beta\langle \zeta, \mu_X\rangle }\Bigr] \leq Ce^{-\frac{2\beta}{N} L_N+\beta C}.
\end{align*}
\end{proposition}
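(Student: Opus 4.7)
The plan is to adapt the proof of \Cref{lem:Lone-moment-torus} to the Euclidean setting, with a test function carefully designed to accommodate the external potential. For each $j$, write the one-particle Hamiltonian felt by $x_j$ as
\[
H_j(x) := \sum_{k\neq j} g(x - x_k) + N V(x),
\]
so that the conditional density of $x_j$ given $X_{\hat j}$ is $\tilde Z_j^{-1} e^{-\beta H_j(x)}\,dx$ with $\tilde Z_j := \int e^{-\beta H_j(y)}\,dy$. With the test function $\hat H_j(x) := H_j(x) - N\zeta(x)$, the conditional exponential-moment computation gives the identity
\[
\E_{X\sim \P_{N,\beta}}\bigl[\tilde Z_j \, e^{\beta \hat H_j(x_j)}\bigr] = \int e^{-\beta N \zeta(y)}\,dy =: I,
\]
which is the analogue of the torus identity used in the proof of \eqref{eq:probabilistic_bd_torus}.

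Next I replace $\tilde Z_j$ by the $P\mu_X$-integral using the structure of $P\mu_X$. Expanding definitions gives $P\mu_X(y) - H_j(y) = g(y-x_j) + \langle V, \mu_X\rangle$, and since $g \geq 0$ this yields $\tilde Z_j \geq e^{\beta \langle V, \mu_X\rangle}\int e^{-\beta P\mu_X(y)}\,dy$ pointwise in $X$. Substituting this into the identity above, summing over $j$, and applying Jensen's inequality $\sum_j e^{\beta a_j} \geq N\exp(\frac{\beta}{N}\sum_j a_j)$ to $a_j := \langle V, \mu_X\rangle + \hat H_j(x_j)$ produces
\[
N I \geq \E\Bigl[\Bigl(\int e^{-\beta P\mu_X(y)}\,dy\Bigr) \cdot N \exp\bigl(\tfrac{\beta}{N}\textstyle\sum_j a_j\bigr)\Bigr].
\]
The key bookkeeping step simplifies $\sum_j a_j$: using $\sum_{j\neq k} g(x_j-x_k) = 2\mc H_N - 2N\langle V, \mu_X\rangle$ and the definitions of $\hat H_j$ and $a_j$, direct counting gives the clean identity $\sum_j a_j = 2\mc H_N(X) - N\langle \zeta, \mu_X\rangle$. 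Invoking the definition \eqref{eq:L_N_defn} of $L_N$, this is at least $2L_N + N\langle \zeta, \mu_X\rangle$, so rearranging gives
\[
\E\Bigl[\Bigl(\int e^{-\beta P\mu_X(y)}\,dy\Bigr) e^{\beta \langle \zeta, \mu_X\rangle}\Bigr] \leq I \cdot e^{-\frac{2\beta}{N} L_N}.
\]

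The final step is to bound $I$ by a constant depending only on $V$ and $d$. Here the hypothesis $\beta \geq 1/(N-1)$ is essential: it ensures $\beta N \geq N/(N-1) \geq 1$. Combined with $\zeta \geq 0$ (Frostman's theorem, \Cref{thm:FrostmanThm}, with the normalization $\mc E(\mu_V) = 0$), one has $e^{-\beta N\zeta} \leq e^{-\zeta}$; then $\zeta \geq V - C$ from \eqref{eq:zeta_like_V} gives $I \leq e^C \int e^{-V(y)}\,dy$, finite by \ref{ass:V_4}. I expect this last step to be the main conceptual subtlety: a more naive choice such as $H_j - V$ in place of $\hat H_j = H_j - N\zeta$ would leave a weight $\int e^{-\beta V}\,dy$ that blows up like $\beta^{-d/2}$ as $\beta \downarrow 1/(N-1)$, whereas subtracting $N\zeta$ (rather than $V$) yields the integrable weight $\int e^{-\beta N\zeta}\,dy$ uniformly bounded in $\beta,N$. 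Everything else is standard manipulation of conditional expectations and careful energy bookkeeping.
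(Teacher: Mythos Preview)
Your proof is correct and follows the same skeleton as the paper's argument: a conditional exponential-moment identity for one resampled particle, the pointwise replacement of the leave-one-out integral by $\int e^{-\beta P\mu_X}$ using $g\geq 0$, Jensen over $j$, and the energy identity linking $\sum_j$ of local energies to $\mc H_N$ and then to $L_N$.

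The one genuine difference is the choice of weight in the conditional moment. The paper inserts $e^{-\zeta(x)}$ (coefficient $1$, independent of $\beta$), so the ``numerator'' $\int e^{-\zeta}$ is bounded immediately; but then the Jensen sum contains a mix of $\frac{\beta}{N}\langle V,\mu_X\rangle$ and $\frac{1}{N}\langle\zeta,\mu_X\rangle$ that must be merged via~\eqref{eq:zeta_like_V}, incurring a $C\beta$ loss, and the hypothesis $\beta\geq 1/(N-1)$ is used to control the resulting coefficient $2\beta-(1+\beta)/N$. You instead insert $e^{-\beta N\zeta(x)}$, which makes the bookkeeping exact --- your $\sum_j a_j = 2\mc H_N(X) - N\langle\zeta,\mu_X\rangle$ holds on the nose with no error term --- and postpone the use of $\beta\geq 1/(N-1)$ to the endgame bound on $I=\int e^{-\beta N\zeta}$. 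Your route is a touch cleaner and actually yields the slightly stronger conclusion $\leq C\,e^{-\frac{2\beta}{N}L_N}$ without the extra $e^{\beta C}$ factor; the paper's route has the minor advantage that its numerator $\int e^{-\zeta}$ is bounded without any hypothesis on $\beta N$.
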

\begin{proof}
\noindent As in the torus, we integrate one $x$-variable at a time. Let
\begin{align*}
    X_{\hat j} &= (x_1, \ldots, x_{j-1}, x_{j+1}, \ldots, x_N) \\ 
    \mu_{X,\hat j} &= \sum_{k\neq j} \delta_{x_k}.
\end{align*}
The local energy experienced by $x_j$ is 
\begin{align*}
    P\mu_{X, \hat j}(x_j) &= \sum_{k\neq j} g(x - x_k)  + \sum_{k \neq j} V(x_k) + (N-1)V(x_j). 
\end{align*}
If we replace $x_j$ with a new point $x$, the global energy is given by
\begin{align}
    \mc H_N(x, X_{\hat j}) := \mc H_N(x_1, \ldots, x_{j-1}, x, x_{j+1}, \ldots, x_N) &= \sum_{k\neq j} g(x - x_k) + NV(x) + \text{(function of $X_{\hat j}$)} \nonumber \\ 
    &= P \mu_{X,\hat j} (x) + V(x) + \text{(function of $X_{\hat j}$)}.\label{eq:global_energy_local}
\end{align}
The global energy is also a sum of local energies, 
\begin{align}\label{eq:energy_identity}
    \mc H_N(X) = \frac{1}{2} \sum_{1\leq j \leq N} ( P \mu_{X,\hat j}(x_j) + 2V(x_j)).
\end{align}
We define the conditional Gibbs measure 
\[
d\P_{N,\beta | X_{\hat j}}(x) = \frac{e^{-\mc H_N(x, X_{\hat j})}}{\int_{\R^d} e^{-\mc H_N(x, X_{\hat j})}\, dx }.
\]
Using equation \eqref{eq:global_energy_local} for the Hamiltonian, we compute the exponential moment
\begin{align*}
    \E_{x\sim \P_{X_{\hat j}, N,\beta}} [e^{\beta P (\mu_{X,\hat j}(x) + V(x)) - \zeta(x)}] &= \frac{\int e^{-\zeta(x)} e^{\beta (P\mu_{X,\hat j}(x)+V(x)) }e^{-\beta \mc H_N(x, X_{\hat j})}\, dx}{\int e^{-\beta \mc H_N(x, X_{\hat j})}\, dx} \\ 
    &= \frac{\int e^{-\zeta(x)}\, dx}{\int e^{-\beta (P\mu_{X,\hat j} + V(x))}\, dx}.
\end{align*}
By \ref{ass:V_4} and \eqref{eq:zeta_like_V} the numerator is bounded, so integrating over the marginal distribution on $X_{\hat j}$ we find
\begin{align*}
    \E_{X\sim \P_{N,\beta}}\Bigl[\Bigl(\int e^{-\beta P\mu_{X,\hat j}(x) - \beta V(x)}\, dx\Bigr)\E_{x\sim \P_{X_{\hat j}, N,\beta}} [e^{\beta P (\mu_{X,\hat j}(x) + V(x)) - \zeta(x)}]\Bigr] \leq C. 
\end{align*}
We may remove the expected value on the inside to obtain 
\begin{align*}
    \E_{X\sim \P_{N,\beta}}\Bigl[\Bigl(\int e^{-\beta P\mu_{X,\hat j}(x) - \beta V(x)}\, dx\Bigr)e^{\beta P (\mu_{X,\hat j}(x_j) + V(x_j)) - \zeta(x_j)}\Bigr] \leq C. 
\end{align*}
For any $X = (x_1, \ldots, x_j, \ldots, x_N)$ and $x \neq x_j$,
\begin{align*}
    P\mu_{X,\hat j}(x) + V(x) &= P\mu_X(x) - g(x - x_j) - V(x) - V(x_j) \\ 
    &\leq P\mu_X(x) + C
\end{align*}
because $g \geq 0$ and $V$ is bounded below. Thus
\begin{align*}
    \int e^{-\beta P\mu_{X,\hat j}(x) - \beta V(x)}\, dx &\geq e^{-C \beta} \int e^{-\beta P\mu_X(x)}\, dx. 
\end{align*}
Using this pointwise bound in our exponential moment calculation gives
\begin{align*}
    \E_{X\sim \P_{N,\beta}}[\Bigl(\int e^{-\beta P\mu_X(x)}\, dx\Bigr) e^{\beta (P (\mu_{X,\hat j}(x_j) + V(x_j)) - \zeta(x_j)} ] \leq Ce^{C \beta}.
\end{align*}
The left hand side is the same for all $1 \leq j \leq N$. Summing and using Jensen's we find 
\begin{align*}
    Ce^{C \beta} &\geq  \E_{X\sim \P_{N,\beta}}\Bigl[\Bigl(\int e^{-\beta P\mu_X(x)}\, dx\Bigr)\frac{1}{N}\sum_{j=1}^N e^{\beta (P \mu_{X,\hat j}(x_j) + V(x_j)) - \zeta(x_j)} \Bigr] \\ 
    &\geq  \E_{X\sim \P_{N,\beta}}\Bigl[\Bigl(\int e^{-\beta P\mu_X(x)}\, dx\Bigr)\exp\Bigl(\frac{1}{N}\sum_{j=1}^N  (\beta P \mu_{X,\hat j}(x_j) + \beta V(x_j) - \zeta(x_j))\Bigr) \Bigr]. 
\end{align*}
We expand the exponential term as
\small
\begin{align*}
    \frac{1}{N}\sum_{j=1}^N  (\beta P (\mu_{X,\hat j}(x_j) + \beta V(x_j) - \zeta(x_j)) &= \frac{2 \beta}{N}\mc H_N(X) - \frac{\beta}{N} \langle V, \mu_X \rangle  - \frac{1}{N}\langle \zeta, \mu_X\rangle && \text{by \eqref{eq:energy_identity},}\\
    &\geq \frac{2\beta}{N}\mc H_N(X) - \frac{\beta+1}{N}\langle \zeta, \mu_X\rangle - C \beta  && \text{by \eqref{eq:zeta_like_V},}\\ 
    &\geq (2\beta-(1+\beta)/N) \langle \zeta, \mu_X\rangle - C\beta + \frac{2\beta}{N} L_N .
\end{align*}
\normalsize
In the last line, we use the regularized minimum energy quantity $L_N$ defined in \eqref{eq:L_N_defn}.
Assuming $\beta \geq 1/(N-1)$,
\[
(2\beta-(1+\beta)/N) \langle \zeta, \mu_X\rangle \geq \beta \langle \zeta, \mu_X\rangle
\]
Rearranging yields \Cref{prop:exponential_moment_estimate}.
\end{proof}

\subsection{The $L^1$ bound}
\Cref{prop:exponential_moment_estimate} involves the integral $\int e^{-\beta P\mu_X(x)}\, dx$.  To obtain a bound for 
the fluctuation of linear statistics we want instead an exponential moment estimate for $\|P\mu_X\|_{L^1}$.  The following
lemma relates these two quantities deterministically.
\begin{lemma}\label{lem:L1_lower_bds_cond_partition}
\begin{align*}
    \int e^{-\beta P\mu_{X}(x)}\, dx \geq \frac{1}{C} e^{\frac{1}{2}\beta( \| P\mu_X\|_{L^1(\mu_V)} - \langle \mu_X, \zeta\rangle)} - \frac{1}{C}.
\end{align*}
\end{lemma}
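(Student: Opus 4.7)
The plan is to mimic the torus proof, replacing Lebesgue measure on $\Torus^d$ with the equilibrium measure $\mu_V$. Where the torus argument used $\int P\mu_X\,dx=0$, here the corresponding identity is a consequence of self-adjointness of $P$:
\[
\int P\mu_X\,d\mu_V=\langle P\mu_X,\mu_V\rangle=\langle\mu_X,P\mu_V\rangle=\langle\mu_X,\zeta\rangle.
\]
Thus $\langle\mu_X,\zeta\rangle$ measures the deviation of $P\mu_X$ from being mean-zero against $\mu_V$, and this is exactly the correction that appears on the right-hand side of the lemma.

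The first step is to use assumption \ref{ass:V_5} to pass from Lebesgue integration to $\mu_V$-integration. Writing $d\mu_V=\mu_V(x)\,dx$ with $\|\mu_V\|_{L^\infty}\leq M$, the trivial pointwise bound $\mu_V(x)\leq M$ gives, for any nonnegative $f$,
\[
\int f(x)\,dx\ \geq\ \frac{1}{M}\int f(x)\,d\mu_V(x).
\]
Applied with $f=e^{-\beta P\mu_X}$, this reduces the problem to a lower bound for $\int e^{-\beta P\mu_X}\,d\mu_V$, and crucially converts the ambient measure into a \emph{probability} measure so Jensen's inequality becomes available.

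Next, exactly as in the torus argument, I would write $(P\mu_X)_-:=\min(P\mu_X,0)$ and use the pointwise inequality $e^{-\beta P\mu_X}\geq e^{-\beta(P\mu_X)_-}-1$, then apply Jensen to the probability measure $\mu_V$ to obtain
\[
\int e^{-\beta(P\mu_X)_-}\,d\mu_V\ \geq\ \exp\!\Bigl(-\beta\int(P\mu_X)_-\,d\mu_V\Bigr).
\]
Finally, decomposing $P\mu_X=(P\mu_X)_++(P\mu_X)_-$ and using the identity $\|P\mu_X\|_{L^1(\mu_V)}=\int(P\mu_X)_+\,d\mu_V-\int(P\mu_X)_-\,d\mu_V$ together with the self-adjointness formula above yields
\[
\int(P\mu_X)_-\,d\mu_V\ =\ \tfrac{1}{2}\bigl(\langle\mu_X,\zeta\rangle-\|P\mu_X\|_{L^1(\mu_V)}\bigr).
\]
Chaining the three displays gives the claimed bound with $C=M=\|\mu_V\|_{L^\infty}$.

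There is no real obstacle here; the argument is a routine adaptation of the torus case. The one point worth flagging is that the relevant norm must be $L^1(\mu_V)$ rather than $L^1(dx)$: Jensen's inequality requires a probability measure, and $P\mu_X$ is not integrable against Lebesgue measure on $\R^d$. Assumption \ref{ass:V_5} is used exactly to interchange $dx$ with $d\mu_V$ at the cost of a bounded multiplicative constant, and its compatibility with the ``mean-zero against $\mu_V$'' identity $\int P\mu_X\,d\mu_V=\langle\mu_X,\zeta\rangle$ is precisely what forces the $\langle\mu_X,\zeta\rangle$ correction term to appear in the final bound.
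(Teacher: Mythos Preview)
Your proposal is correct and follows the paper's proof essentially line for line: pass from $dx$ to $d\mu_V$ via \ref{ass:V_5}, use the pointwise bound $e^{-\beta P\mu_X}\geq e^{-\beta(P\mu_X)_-}-1$, apply Jensen against the probability measure $\mu_V$, and then use self-adjointness $\langle P\mu_X,\mu_V\rangle=\langle\mu_X,\zeta\rangle$ to rewrite $\int(P\mu_X)_-\,d\mu_V$ in terms of $\|P\mu_X\|_{L^1(\mu_V)}$ and $\langle\mu_X,\zeta\rangle$. Your commentary on why the $L^1(\mu_V)$ norm (rather than $L^1(dx)$) is forced, and why the $\langle\mu_X,\zeta\rangle$ correction appears, is accurate and adds nothing the paper omits.
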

\begin{proof}
We assumed $\mu_V \in L^{\infty}$, so $dx\geq \frac{1}{C} d\mu_V(x)$. Thus
\begin{align*}
    \int e^{-\beta P\mu_{X}(x)}\, dx &\geq \frac{1}{C}\int e^{-\beta P\mu_{X}(x)}\, d\mu_V(x) \\  
    &\geq \frac{1}{C} \int e^{-\beta (P\mu_{X})_-(x)}\, d\mu_V(x) - \frac{1}{C} && \text{where } (P\mu_{X})_-(x) = P\mu_{X}(x) 1_{P\mu_{X}(x) < 0}\\ 
    &\geq \frac{1}{C} \exp(-\beta \int (P\mu_{X})_-(x)\, d\mu_V(x)) - \frac{1}{C} && \text{by Jensen's inequality.}
\end{align*}
We estimate the $L^1$ norm using the integral of the negative part,
\begin{align*}
    \| P\mu_{X}\|_{L^1(\mu_V)} &= \langle P\mu_{X}, \mu_V\rangle - 2\int (P\mu_{X})_-(x)\, d\mu_V(x) \\ 
    &= \langle \mu_X, \zeta\rangle - 2\int (P\mu_{X})_-(x)\, d\mu_V(x) 
\end{align*}
where we used self-adjointness of $P$ in the second line.
Thus
\begin{align*}
    \int e^{-\beta P\mu_X(x)}\, dx\geq \frac{1}{C} e^{\frac{1}{2}\beta( \| P\mu_X\|_{L^1(\mu_V)} - \langle\zeta, \mu_X\rangle)} - \frac{1}{C}.
\end{align*}
\end{proof}
Combining the probabilistic estimate \Cref{prop:exponential_moment_estimate} and the $L^1$ estimate \Cref{lem:L1_lower_bds_cond_partition},
\begin{align*}
    Ce^{-\frac{2\beta}{N}L_N+C \beta } &\geq \E_{X\sim \P_{N,\beta}}\Bigl[\Bigl(\int e^{-\beta P\mu_{X}(x) }\, dx\Bigr)e^{\beta\langle \zeta, \mu_X\rangle }\Bigr]  \\ 
    &\geq \frac{1}{C}\E_{X\sim \P_{N,\beta}}\Bigl[\Bigl(e^{\frac{1}{2}\beta( \| P\mu_X\|_{L^1(\mu_V)} - \langle \zeta, \mu_X\rangle)} - 1\Bigr)e^{\beta\langle \zeta, \mu_X\rangle }\Bigr]  \\ 
    &\geq \frac{1}{C}\E_{X\sim \P_{N,\beta}}\Bigl[e^{\frac{1}{2}\beta \| P\mu_X\|_{L^1(\mu_V)} } \Bigr] - \frac{1}{C}\E_{X\sim \P_{N,\beta}}\Bigl[e^{\beta\langle \zeta, \mu_X\rangle }\Bigr]. 
\end{align*}
Using \Cref{prop:confinement_by_zeta}, 
\begin{align*}
\E_{X\sim \P_{N,\beta}}\Bigl[e^{\beta\langle \zeta, \mu_X\rangle }\Bigr] \leq \E_{X\sim \P_{N,\beta}}\Bigl[e^{\frac{1}{2}\beta N\langle \zeta, \mu_X\rangle }\Bigr]^{2/N} \leq e^{C\beta N^{s/d}}.
\end{align*}
Using \Cref{prop:energy_lower_bd}, $L_N\geq -CN^{2-2/d}$, so we find 
\begin{proposition}\label{prop:L1_estimate_expoenential_moment}
If $N\geq 2$ and $\beta \geq 1/(N-1)$,
\begin{align*}
\E_{X\sim \P_{N,\beta}}\Bigl[e^{\frac{1}{2}\beta \| P\mu_X\|_{L^1(\mu_V)} } \Bigr] \leq Ce^{C \beta N^{(d-2)/d} }.
\end{align*}
\end{proposition}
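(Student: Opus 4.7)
The proof is essentially a bookkeeping exercise that combines the three inputs already established: the exponential moment estimate (Proposition~\ref{prop:exponential_moment_estimate}), the $L^1$ lower bound (Lemma~\ref{lem:L1_lower_bds_cond_partition}), and Serfaty's two bounds (Propositions~\ref{prop:energy_lower_bd} and~\ref{prop:confinement_by_zeta}). The goal is to get from $\int e^{-\beta P\mu_X}$, which is what Proposition~\ref{prop:exponential_moment_estimate} produces, to the desired exponential moment of $\|P\mu_X\|_{L^1(\mu_V)}$.

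First, I would write down Proposition~\ref{prop:exponential_moment_estimate} as the starting inequality
\[
\E_{X\sim\P_{N,\beta}}\!\Bigl[\Bigl(\int e^{-\beta P\mu_X(x)}\,dx\Bigr)e^{\beta\langle\zeta,\mu_X\rangle}\Bigr]\leq Ce^{-\frac{2\beta}{N}L_N+C\beta},
\]
and substitute the pointwise lower bound from Lemma~\ref{lem:L1_lower_bds_cond_partition} into the $\int e^{-\beta P\mu_X}$ factor. This produces two terms inside the expectation. In the first term, the factors multiply as
\[
e^{\frac{1}{2}\beta(\|P\mu_X\|_{L^1(\mu_V)}-\langle\zeta,\mu_X\rangle)}\cdot e^{\beta\langle\zeta,\mu_X\rangle}=e^{\frac{1}{2}\beta\|P\mu_X\|_{L^1(\mu_V)}}\cdot e^{\frac{1}{2}\beta\langle\zeta,\mu_X\rangle}\geq e^{\frac{1}{2}\beta\|P\mu_X\|_{L^1(\mu_V)}},
\]
where the last inequality uses $\zeta\geq 2\mc E(\mu_V)=0$ from Frostman's theorem. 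The second (negative) term contributes $-\frac{1}{C}\E[e^{\beta\langle\zeta,\mu_X\rangle}]$ after distributing.

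Next I would dispose of the $\E[e^{\beta\langle\zeta,\mu_X\rangle}]$ contribution. By Jensen's inequality applied with exponent $2/N$,
\[
\E_{X\sim\P_{N,\beta}}\bigl[e^{\beta\langle\zeta,\mu_X\rangle}\bigr]\leq \E_{X\sim\P_{N,\beta}}\bigl[e^{\frac{1}{2}\beta N\langle\zeta,\mu_X\rangle}\bigr]^{2/N},
\]
and Proposition~\ref{prop:confinement_by_zeta} bounds the inner expectation by $\exp(C\beta N^{2-2/d}+C_\zeta N)$, yielding $\E[e^{\beta\langle\zeta,\mu_X\rangle}]\leq e^{C\beta N^{(d-2)/d}+C}$ after taking the $2/N$ power. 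Finally, Proposition~\ref{prop:energy_lower_bd} gives $-\frac{2\beta}{N}L_N\leq C\beta N^{(d-2)/d}$. Collecting everything and rearranging delivers
\[
\E_{X\sim\P_{N,\beta}}\!\Bigl[e^{\frac{1}{2}\beta\|P\mu_X\|_{L^1(\mu_V)}}\Bigr]\leq Ce^{C\beta N^{(d-2)/d}}.
\]

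\textbf{Main obstacle.} The only subtle point is handling the cross-term $\langle\zeta,\mu_X\rangle$, which appears with a positive sign on the right-hand side of Proposition~\ref{prop:exponential_moment_estimate} and with a negative sign inside Lemma~\ref{lem:L1_lower_bds_cond_partition}. One must use the pointwise positivity $\zeta\geq 0$ to absorb the residual $e^{\frac{1}{2}\beta\langle\zeta,\mu_X\rangle}$ factor in the main term, and separately control the subtracted contribution by Proposition~\ref{prop:confinement_by_zeta}. The condition $\beta\geq 1/(N-1)$ is exactly what is already needed for Proposition~\ref{prop:exponential_moment_estimate}, so no new smallness hypothesis enters at this step.
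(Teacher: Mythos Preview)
Your proposal is correct and follows essentially the same approach as the paper: substitute Lemma~\ref{lem:L1_lower_bds_cond_partition} into Proposition~\ref{prop:exponential_moment_estimate}, use $\zeta\geq 0$ to drop the residual $e^{\frac{1}{2}\beta\langle\zeta,\mu_X\rangle}$ factor, then bound the subtracted term via Jensen and Proposition~\ref{prop:confinement_by_zeta} and handle $L_N$ via Proposition~\ref{prop:energy_lower_bd}. In fact you are slightly more explicit than the paper, which uses the inequality $\zeta\geq 0$ in the same step but does not name it.
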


\subsection{Application to fluctuations}

Let $\varphi \in C^2_c(\Sigma)$ be an observable supported in $\Sigma = \supp \mu_V$. We are interested in the fluctuations of the linear observable
\[
\langle \mu_X - N \mu_V, \varphi\rangle = \sum_{1 \leq j \leq N} \varphi(x_j) - N \int \varphi\, d\mu_V. 
\]
Assume we are in $d\geq 3$ with the Coulomb interaction $g(x) = \frac{1}{d-2}|x|^{-(d-2)}$. Then $g$ inverts the Laplacian, so 
\begin{align*}
    P (-\Delta) \varphi &= - \varphi - \langle V, \Delta \varphi\rangle 1_{\R^d} - \langle 1, \Delta \varphi\rangle V \\ 
    &= \varphi - \langle V, \Delta \varphi\rangle 1_{\R^d}.
\end{align*}
 Thus
\begin{align*}
    \langle \mu_X - N \mu_V, \varphi\rangle &= \langle \mu_X - N\mu_V, P(-\Delta)\varphi\rangle \\ 
    &= \langle P(\mu_X - N\mu_V), (-\Delta) \varphi\rangle \\ 
    &= \langle P\mu_X  -N \zeta, (-\Delta) \varphi\rangle \\
    &= \langle P\mu_X, (-\Delta) \varphi\rangle && \text{Because $\supp \varphi \subset \Sigma$}.
\end{align*}
For the last term on the right, we have the estimate
\begin{align*}
    |\langle P\mu_X, (-\Delta) \varphi\rangle| 
    &\leq \| P\mu_X \|_{L^1(\mu_V)} \Bigl\|\frac{-\Delta \varphi(x)}{\mu_V(x)}\Bigr\|_{\infty} \\
    &\leq \| P\mu_X\|_{L^1(\mu_V)} \|\Delta \varphi\|_{L^\infty} \|\frac{1}{\mu_V}\|_{L^\infty(\Sigma)},
\end{align*}
where we write $d\mu_V(x) = \mu_V(x) dx$ and in the last line we used that $\varphi$ is supported in $\Sigma$.
Applying this into our fluctuation estimate, Proposition~\ref{prop:L1_estimate_expoenential_moment} completes the 
proof of Theorem~\ref{thm:fluctuations_Rd}.

Note that we used the fact that $\Delta$ is a local operator in the step $\langle \zeta, (-\Delta \varphi)\rangle = 0$. If $g$ were a Riesz rather than Coulomb potential, the analagous quantity would be $\langle \zeta, (-\Delta)^{s/2} \varphi\rangle$, and the same argument does not work because the fractional Laplacian is non-local.  This is the only step of the argument that breaks for the case of a Riesz potential, and it would be interesting to try and adapt this argument to the Riesz case.


\end{document}